\documentclass[11pt]{article}
\usepackage[margin=1in]{geometry}
\usepackage{amsmath,amssymb,amsthm}
\usepackage{algorithm}
\usepackage{algpseudocode}
\usepackage{graphicx}
\graphicspath{{../figures/}{figures/}{./}}
\usepackage{booktabs}
\usepackage{microtype}
\usepackage[hidelinks]{hyperref}
\usepackage{xcolor}
\usepackage{float}
\newtheorem{definition}{Definition}
\newtheorem{proposition}{Proposition}
\newtheorem{lemma}{Lemma}
\newtheorem{theorem}{Theorem}
\newtheorem{corollary}{Corollary}

\newcommand{\nTrans}{2400}
\newcommand{\medSpeedup}{12.8}
\newcommand{\maxSpeedup}{3316}
\newcommand{\rhoRecall}{96\%}
\newcommand{\rhoFalse}{29\%}
\newcommand{\safeFalse}{49\%}
\newcommand{\bwUp}{58}
\newcommand{\lat}{16}
\newcommand{\incrShareLo}{85}    
\newcommand{\incrShareHi}{21}     
\newcommand{\medRepairSpeedup}{2}  
\newcommand{\transFrac}{0.18}    
\newcommand{\scaleThru}{19}    
\newcommand{\scaleVram}{49}    
\newcommand{\cpuGpu}{14}      
\newcommand{\gridRate}{99\%}  
\newcommand{\pairBrokenA}{370}
\newcommand{\pairBrokenB}{414}
\newcommand{\pairHA}{0.68}
\newcommand{\pairHB}{0.86}

\newcommand{\pairEdA}{3.0}
\newcommand{\pairEdB}{6.8}
\newcommand{\pairSceneA}{adversarial crossing}
\newcommand{\pairSceneB}{ring rotation}

\title{\vspace{-3.0em}\textbf{Repair Entropy in Dynamic Geometric\\ Nearest-Neighbour Structures}}

\author{%
Faruk Alpay\thanks{Correspondence: \texttt{alpay@lightcap.ai}} \quad Bu\u{g}ra K\i l\i\c{c}ta\c{s} \\[3pt]
\normalsize Department of Computer Engineering, Bah\c{c}e\c{s}ehir University, Istanbul, Turkey \\
\normalsize \texttt{\{faruk.alpay,~bugra.kilictas\}@bahcesehir.edu.tr}}
\date{}

\begin{document}
\maketitle

\begin{abstract}
\noindent
We study the data-structural problem of maintaining exact nearest-neighbour certificates
for a set of points moving by small steps in low-dimensional Euclidean space. Each point
carries a certificate (its nearest neighbour and the two smallest neighbour distances)
and a \emph{validity radius} derived from the clearance $c_i=d_2^i-d_1^i$; by the triangle
inequality a certificate cannot fail while the points move by less than $c_i/4$, so a step
of maximum displacement $\varepsilon$ leaves every certificate with $c_i>4\varepsilon$
provably valid and confines all failures to a \emph{repair frontier}. We introduce the
\emph{repair-frontier entropy} $H(F_t)\in[0,1]$, the Shannon entropy of the failed
certificates over the index cells, and show it is a workload invariant for maintenance: at
equal failure counts it sets the cost of event-driven, cell-by-cell repair (a diffuse
frontier touches an order of magnitude more index cells than a localized one of the same size,
roughly doubling that cost), while a batched repair is entropy-insensitive. The dominant
axis is frontier size: incremental repair, touching only $O(|F_t|)$, beats an $O(N)$ rebuild
of the same index on the same GPU while the frontier is a small fraction of the set, and
rebuild wins once it saturates; entropy tilts that boundary. We certify every frontier against an
exact GPU oracle on an RTX PRO 6000 (Blackwell), a tiled all-pairs ground truth and a
strong $O(N)$ GPU spatial-index rebuild, and measure incremental repair on an Apple M4
Pro whose unified memory lets either processor touch the resident index without a copy. In
$d{=}2$ and $d{=}3$, across ten motion families and $N$ up to $16{,}000$, the safety rule
misses no invalid certificate, the entropy boundary is stable, and incremental repair beats
even the $O(N)$ GPU rebuild whenever the frontier is a small fraction of the set. We release
\nTrans{} labelled transitions with per-strategy times, frontier geometry, and a
certificate-failure audit; every transition was cross-validated between host and oracle to
the bit.
\end{abstract}

\section{Introduction}
A dynamic geometric data structure over moving points spends its life in a loop: the points
move a little, a few correctness certificates break, the structure is repaired, and the loop
repeats \cite{basch1999,rahmati2013}. For the nearest-neighbour relation, the building
block of closest-pair, Euclidean spanners, and Voronoi/Delaunay maintenance
\cite{agarwal2003indexing,samet2006}, the certificate of point $i$ is its nearest
neighbour together with the gap to the runner-up, and the maintenance question is which
certificates a step can invalidate and how to restore them. Classical analyses bound the
number of such events and the cost per event \cite{basch1999,bentleysaxe1980}; we ask a
finer, empirical question: given the geometry of the invalid set, \emph{how} should the
structure be repaired?

Our starting point is a local certificate with a provable validity radius: if every point
moves by at most $\varepsilon$ in a step, the triangle inequality keeps point $i$'s nearest
neighbour fixed whenever its clearance exceeds $4\varepsilon$ (Lemma~\ref{lem:safe}). All
possible failures therefore lie in a \emph{repair frontier}, and the structure need only
revisit that frontier. The contribution is to show that the \emph{shape} of the frontier, not
merely its size, decides the cheapest repair. We summarise that shape by one number, the
\emph{repair-frontier entropy} $H(F_t)$, and demonstrate that it behaves as a workload
invariant: at fixed frontier size it sets how expensive an event-driven, cell-by-cell repair
is (a diffuse frontier touches far more index cells), and it shifts the size at which a
rebuild becomes cheaper than incremental repair.

To make ``cheapest'' concrete and trustworthy we instrument the structure on two machines
that play complementary roles. An RTX PRO 6000 (Blackwell) GPU is used as an exact oracle and
as a strong competitor: it certifies the true invalid frontier every step by a tiled
all-pairs computation, and it offers an $O(N)$ GPU spatial-index rebuild against which
incremental repair must justify itself \cite{johnson2021faiss,zhou2008,karras2012}. An Apple
M4 Pro, whose CPU and GPU share one physical memory \cite{mlx2023}, hosts the resident index
and performs incremental repair without copying state. We contribute (i) the repair-frontier
certificate structure with its validity-radius safety rule (Algorithm~\ref{alg:repair},
Lemma~\ref{lem:safe}); (ii) repair-frontier entropy as a measured predictor of the best
maintenance strategy in $d{=}2$ and $d{=}3$; and (iii) a cross-validated dataset of
\nTrans{} labelled transitions with a certificate-failure audit against the exact oracle.

\section{The repair-frontier certificate structure}
For $N$ points in $\mathbb{R}^d$ ($d\in\{2,3\}$) we keep, per point, a record holding its
nearest neighbour $\mathrm{nn}_i$, the two smallest neighbour distances $d_1^i\le d_2^i$, the
clearance $c_i=d_2^i-d_1^i$, and the index cell containing it. The points sit in a uniform
grid \cite{bentley1975} whose resolution is chosen so the mean cell occupancy is a small
constant; $C$ denotes the number of occupied cells. Exact nearest neighbours for ground truth
are computed with a $k$-d tree \cite{arya1998}. The clearance is the Voronoi-style margin by which the nearest
neighbour beats the runner-up, and it yields a kinetic validity radius.

\begin{definition}[Certificate, clearance, validity radius]
The certificate of point $i$ is $(\mathrm{nn}_i,d_1^i,d_2^i)$. Its \emph{clearance} is
$c_i=d_2^i-d_1^i\ge 0$ and its \emph{validity radius} is $r_i=c_i/4$.
\end{definition}

\begin{lemma}[Safety]
\label{lem:safe}
If every point moves by at most $\varepsilon$ in a step and $c_i>4\varepsilon$, then
$\mathrm{nn}_i$ is unchanged after the step.
\end{lemma}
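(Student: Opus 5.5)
The plan is a direct triangle-inequality comparison of distances before and after the step. Write $p_k$ for the position of point $k$ before the step and $p_k'$ after, with $\|p_k'-p_k\|\le\varepsilon$ for all $k$. Let $j=\mathrm{nn}_i$, so $d_1^i=\|p_i-p_j\|$. The certificate's clearance says $\|p_i-p_k\|\ge d_2^i=d_1^i+c_i$ for every $k\ne i,j$. It suffices to show that after the step $\|p_i'-p_j'\|<\|p_i'-p_k'\|$ for every such $k$. Then $j$ is still the strict nearest neighbour, with no tie, and $\mathrm{nn}_i$ is unchanged.

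First, I bound how much a single pairwise distance can change. Apply the triangle inequality twice:
\[
\bigl|\,\|p_i'-p_k'\|-\|p_i-p_k\|\,\bigr|\le\|p_i'-p_i\|+\|p_k'-p_k\|\le 2\varepsilon .
\]
This gives $\|p_i'-p_j'\|\le d_1^i+2\varepsilon$, and for $k\ne i,j$ it gives $\|p_i'-p_k'\|\ge d_2^i-2\varepsilon=d_1^i+c_i-2\varepsilon$.

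Second, I chain the two bounds. Since $c_i>4\varepsilon$,
\[
\|p_i'-p_k'\|\ \ge\ d_1^i+c_i-2\varepsilon\ >\ d_1^i+2\varepsilon\ \ge\ \|p_i'-p_j'\| .
\]
So every competitor is strictly farther than $j$ after the step, and $\mathrm{nn}_i$ is unchanged.

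There is no real obstacle here. The only care needed is bookkeeping: both endpoints of each pair move, which is why the distance change is $2\varepsilon$ and the total slack needed is $4\varepsilon$, not $2\varepsilon$. The strict inequality $c_i>4\varepsilon$ matters because it rules out ties. One should also note that the argument uses only the stored $d_2^i$ as a lower bound on all non-nearest distances. Ties at time $t$ are handled as well: if $c_i=0$, the hypothesis simply fails.
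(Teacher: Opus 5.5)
Your proof is correct and is the paper's argument: the triangle inequality gives $d'(i,\mathrm{nn}_i)\le d_1^i+2\varepsilon$ and $d'(i,k)\ge d_2^i-2\varepsilon$ because both endpoints of each pair move at most $\varepsilon$, and $c_i>4\varepsilon$ chains these into a strict inequality. Your remarks that strictness rules out ties and that the degenerate case $c_i=0$ simply fails the hypothesis are sound additions.
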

\begin{proof}
Write $d(\cdot,\cdot)$ for the post-step distances. Point $i$ and its neighbour both move by
at most $\varepsilon$, so $d(i,\mathrm{nn}_i)\le d_1^i+2\varepsilon$; for any other point $k$,
$d(i,k)\ge d_2^i-2\varepsilon$ since the pre-step distance was at least $d_2^i$. If
$c_i=d_2^i-d_1^i>4\varepsilon$ then $d_2^i-2\varepsilon>d_1^i+2\varepsilon$, so
$d(i,k)>d(i,\mathrm{nn}_i)$ for every $k\neq\mathrm{nn}_i$ and the nearest neighbour cannot
change.
\end{proof}

\begin{definition}[Repair frontier, pressure, entropy]
The \emph{repair frontier} $F_t$ is the set of certificates that actually fail across the
step (the nearest neighbour changes). The \emph{clearance pressure} is $P_t=|F_t|/N$. With
$q_\kappa$ the fraction of $F_t$ in occupied cell $\kappa$, the \emph{normalized
repair-frontier entropy} is
\[
H(F_t)=\frac{-\sum_\kappa q_\kappa\log q_\kappa}{\log C}\in[0,1],
\]
with $H\!\to\!0$ when the frontier collapses into one cell (localized) and $H\!\to\!1$ when it
spreads over the whole occupied support (diffuse).
\end{definition}

By Lemma~\ref{lem:safe}, $F_t\subseteq\{i:c_i\le 4\varepsilon_t\}$ with
$\varepsilon_t=\max_i\lVert p_i(t{+}1)-p_i(t)\rVert$, so the structure may skip every point of
larger clearance and re-validate only the rest; this set is a guaranteed superset of $F_t$,
i.e.\ the safety rule never misses a failure. Algorithm~\ref{alg:repair} is the resulting
maintenance step. Its cost is dominated by the second loop, which touches one neighbour block
per occupied frontier cell; a localized frontier visits few cells and a diffuse one of equal
size visits many, which is exactly what $H(F_t)$ measures.

\begin{algorithm}[t]
\caption{Repair-frontier maintenance for one step}
\label{alg:repair}
\begin{algorithmic}[1]
\Require certificates $(\mathrm{nn}_i,d_1^i,d_2^i,\mathrm{cell}_i)$; positions $p,p'$; grid $G$
\State $\varepsilon \gets \max_i \lVert p'_i - p_i\rVert$ \Comment{maximum step}
\State $F \gets \{\, i : d_2^i-d_1^i \le 4\varepsilon \,\}$ \Comment{candidate frontier; rest safe by Lemma~\ref{lem:safe}}
\State update $\mathrm{cell}_i$ in $G$ for points that changed cell
\For{each occupied cell $\kappa$ that meets $F$, in index order}
  \State gather the $3^d$ neighbour block of $\kappa$ from $G$
  \State recompute $(\mathrm{nn}_i,d_1^i,d_2^i)$ for $i\in F\cap\kappa$ from the block
\EndFor
\State \Return updated certificates \Comment{no invalidated certificate is missed}
\end{algorithmic}
\end{algorithm}

\section{Experimental setup}
We drive the structure with ten Euclidean motion families chosen so that the size and the
spatial spread of the frontier vary independently: Brownian and jittered-lattice diffusion,
coherent cluster drift, filament flow along a curve, rigid and differential (vortex)
rotation, an adversarial crossing of two streams, a splitting/merging cluster, a shear band,
and a degenerate near-lattice perturbation \cite{brinkhoff2002,samet2006}. Diffusion and
lattice perturbation produce diffuse frontiers; shear, vortex, crossing and split/merge
produce localized ones. We run $d{=}2$ and $d{=}3$, $N$ up to $16{,}000$, ten step scales,
and several seeds. The exact invalid frontier $F_t$ is established each step by the GPU oracle
(below); the local predictor $\rho_i=\delta_i/c_i>\tfrac12$ is reported as a cheap heuristic
that recovers \rhoRecall{} of it.

\paragraph{Six maintenance strategies.}
Each returns correct certificates for the frontier; they differ in where the repair runs and
whether it is incremental ($O(|F_t|)$) or a rebuild ($O(N)$). \emph{Incremental} strategies
query only the frontier against a resident index (Algorithm~\ref{alg:repair}):
\emph{\texttt{m4\_local\_repair}} runs the cell loop on the host CPU (event-driven; cost grows
with the number of index regions touched); \emph{\texttt{m4\_hybrid\_repair}} gathers the
candidate blocks and lets the M4 GPU do the distances over unified memory, no copy
\cite{mlx2023}; \emph{\texttt{rtx\_grid\_repair}} runs the frontier query on the Blackwell GPU
against a resident grid index. \emph{Rebuild} strategies recompute every certificate:
\emph{\texttt{rtx\_grid\_rebuild}} with a GPU uniform-grid index ($O(N)$),
\emph{\texttt{rtx\_full\_rebuild}} by an exact $N\times N$ pass (naive baseline). Finally
\emph{\texttt{rtx\_frontier\_oracle}} relabels the frontier against all points by exact tiled
brute force ($O(|F_t|N)$) and is the ground-truth oracle that certifies the others each step.
The pivotal comparison is same-device: \texttt{rtx\_grid\_repair} versus
\texttt{rtx\_grid\_rebuild} isolate $O(|F_t|)$ against $O(N)$ on identical hardware.

\paragraph{Cost model.}
Host strategies are timed as device-resident compute with no bus traffic, the point of
unified memory. GPU strategies add a transfer term $\tau(b)=\ell+b/\beta$ with latency
$\ell\approx\lat~\mu$s and bandwidth $\beta\approx\bwUp$~GB/s measured on the card: the
frontier oracle streams only the moved points and their returned triples, while the two
rebuilds stream all $N$. GPU compute is timed with CUDA events. The best strategy at a
transition is the wall-clock minimum; the accounting of compute plus transfer is
roofline-style \cite{williams2009roofline}.

\begin{proposition}[Repair--rebuild crossover]
\label{prop:cross}
Let incremental repair cost $h(|F_t|,H)$ be increasing in both arguments and the $O(N)$ GPU
rebuild cost $g(N)+\tau(b_N)$ be independent of $|F_t|$. Then for each $N$ there is a pressure
threshold $P^\ast(H)$, non-increasing in $H$, above which rebuild wins; the crossover moves to
lower pressure as the frontier grows more diffuse.
\end{proposition}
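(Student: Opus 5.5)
Write $|F_t|=PN$ and, for fixed $N$, set $R(N)=g(N)+\tau(b_N)$, the rebuild cost, which by hypothesis does not depend on $P$ or $H$. Define
\[
P^\ast(H)=\inf\{P\in[0,1]: h(PN,H)\ge R(N)\},
\]
with the convention $P^\ast(H)=1$ when the set is empty (rebuild never wins on $[0,1]$).

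\textbf{Step 1: the threshold separates the two regimes.} Since $h$ is increasing in its first argument, the set $\{P: h(PN,H)\ge R(N)\}$ is an up-set of $[0,1]$. For every $P>P^\ast(H)$ we therefore have $h(PN,H)\ge R(N)$, so rebuild is at least as cheap. For every $P<P^\ast(H)$ we have $h(PN,H)<R(N)$, so incremental repair wins. This establishes the existence of a threshold for each $N$ and each $H$. Because $|F_t|$ is integral, $P$ actually ranges over the grid $\{k/N\}$. The infimum is taken there, or equivalently over $[0,1]$ with $h$ read at $\lfloor PN\rfloor$; monotonicity survives either reading.

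\textbf{Step 2: monotonicity in $H$.} Take $H\le H'$. Since $h$ is increasing in $H$, $h(PN,H')\ge h(PN,H)$ for every $P$. Hence
\[
\{P: h(PN,H)\ge R\}\subseteq\{P: h(PN,H')\ge R\},
\]
and taking infima over the larger set gives $P^\ast(H')\le P^\ast(H)$. This is the statement that the crossover moves to lower pressure as the frontier becomes more diffuse. The empty-set convention is consistent with this, since the set for $H'$ is empty only if the set for $H$ is.

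\textbf{Step 3: the main obstacle is the point of tie.} Ties and discontinuities of $h$ mean that "above which rebuild wins" holds only in the weak sense $\ge$. If strict dominance is wanted, I would assume $h$ is strictly increasing in $|F_t|$ and argue only for $P>P^\ast$, where strictness follows from the up-set structure. The modelling premise itself, that $h$ grows with $H$ at fixed $|F_t|$, is not derived here. It is supplied by the cost discussion after Algorithm~\ref{alg:repair}: one neighbour block is touched per occupied frontier cell, so more spread means more cells. I would treat that premise as a hypothesis rather than prove it.
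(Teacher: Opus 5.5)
Your argument is correct, and it takes a different, more complete route than the paper. The paper never writes out a proof of Proposition~\ref{prop:cross}. It says the proposition only fixes the \emph{shape} of the boundary, and its appendix handles the crossover only through Theorem~\ref{thm:cost}. That theorem instantiates $h$ as the expiry-heap cost $O(|F_t|\log N)$ against a $\Theta(N)$ rebuild and places the threshold asymptotically at $|F_t|\sim N/\log N$. Theorem~\ref{thm:entropy} ($m\ge C^{H(F_t)}$ touched cells) is then offered as the reason event-driven cost should grow with $H$. The paper's route therefore gives a location for the boundary and a rationale for the hypothesis. It never derives the existence of $P^\ast(H)$ or its monotonicity in $H$ from the stated assumptions.

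Your up-set/infimum argument supplies exactly that step. It needs only weak monotonicity of $h$ in $H$. This matters because the paper's batched GPU repair is described as essentially entropy-insensitive, so strict monotonicity in $H$ cannot be assumed.

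You are also right to keep \emph{$h$ increasing in $H$} as a hypothesis rather than deriving it. Theorem~\ref{thm:entropy} gives only a lower bound $\Omega(C^{H(F_t)})$ on cell visits. That bound does not make the actual cost monotone in $H$ at fixed $|F_t|$.

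One small correction to Step~3: with $h$ strictly increasing in $|F_t|$, strict dominance for $P>P^\ast$ holds on the grid reading $P\in\{k/N\}$, where the infimum is attained. It fails under the $\lfloor PN\rfloor$ reading. There $P^\ast=k_0/N$ and $h(\lfloor PN\rfloor,H)=h(k_0,H)$ on $[k_0/N,(k_0+1)/N)$, which may equal $R(N)$ exactly.
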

\noindent The proposition fixes the \emph{shape} of the boundary (entropy must shift it);
its location is measured. Figure~\ref{fig:matched} isolates the effect at fixed $|F_t|$. Full
proofs, the $O(|F_t|\log N)$ per-step cost (Theorem~\ref{thm:cost}), and an entropy bound on
event-driven work (Theorem~\ref{thm:entropy}) are in Appendix~\ref{app:proofs}.

\section{Results}
\paragraph{The safety rule never misses a failure.}
Every transition is certified by the exact GPU oracle, so we can audit the validity-radius
rule directly. Across all \nTrans{} transitions and both dimensions the rule of
Lemma~\ref{lem:safe} admits \emph{zero} missed failures: every invalidated certificate has
clearance at most $4\varepsilon_t$ (Table~\ref{tab:failure}). The price of the guarantee is
conservatism: it re-validates a fraction \safeFalse{} of certificates that remain valid,
while the cheap local heuristic $\rho_i>\tfrac12$ recovers \rhoRecall{} of the true failures
at \rhoFalse{} false repairs. The lemma bounds the frontier for correctness and the heuristic
tracks it tightly; what governs the cost of clearing it is its entropy.

\begin{table}[t]
\centering
\small
\caption{Certificate-failure audit against the exact GPU oracle, by motion family (pooled
over $d\in\{2,3\}$, $N$, step scale). \emph{invalid}: mean fraction of certificates that fail
per step; \emph{miss (safe)}: failures the validity-radius rule wrongly calls safe (Lemma
guarantees $0$); \emph{recall ($\rho$)} and \emph{false ($\rho$)}: the local heuristic's
detection and over-flagging.}
\label{tab:failure}
\begin{tabular}{lrrrr}
\toprule
motion & invalid & miss (safe) & recall ($\rho$) & false ($\rho$) \\
\midrule
\texttt{vortex} & 0.062 & 0 & 0.93 & 0.18 \\
\texttt{shear\_band} & 0.091 & 0 & 0.94 & 0.21 \\
\texttt{adversarial\_crossing} & 0.293 & 0 & 0.99 & 0.63 \\
\texttt{split\_merge} & 0.325 & 0 & 1.00 & 0.68 \\
\texttt{filament} & 0.431 & 0 & 0.94 & 0.30 \\
\texttt{uniform\_brownian} & 0.489 & 0 & 0.95 & 0.24 \\
\texttt{ring\_rotation} & 0.538 & 0 & 0.89 & 0.22 \\
\texttt{clustered\_drift} & 0.634 & 0 & 0.98 & 0.27 \\
\texttt{grid\_jitter} & 0.855 & 0 & 0.98 & 0.13 \\
\texttt{spiral} & 0.935 & 0 & 0.98 & 0.06 \\
\bottomrule
\end{tabular}

\end{table}

\paragraph{Repair-frontier entropy is a maintenance invariant.}
Figure~\ref{fig:phase} is the maintenance surface over clearance pressure $P_t=|F_t|/N$ and
frontier entropy $H(F_t)$, for $d{=}2$ and $d{=}3$ and faceted by $N$; colour is the cheapest
of the six strategies. The clean comparison is same-device: \texttt{rtx\_grid\_repair}
(incremental, $O(|F_t|)$) against \texttt{rtx\_grid\_rebuild} (rebuild, $O(N)$) on the same
GPU. Incremental repair owns the low-to-moderate-pressure band and the rebuild owns the
high-pressure band, with the boundary near $P_t\!\sim\!1/\log N$ and tilting toward lower
pressure as entropy rises: a localized frontier is cleared incrementally even at higher
pressure, a diffuse one of equal size falls to rebuild sooner. Among incremental repairs the
entropy axis again decides the cheapest implementation: the unified-memory host wins the
smallest, most compact frontiers (no launch or transfer overhead), the GPU incremental wins
as the frontier grows. Repair beats rebuild by a median $\medRepairSpeedup\times$ in the
low-pressure regime and incremental strategies take \incrShareLo\% of transitions below
$P_t{=}0.1$ but only \incrShareHi\% above it; the pattern is identical in $d{=}3$. The
contours give the best strategy's speedup over a naive full rebuild (median $\medSpeedup\times$,
max $\maxSpeedup\times$). Transfer is not the lever: the modelled bus cost is \transFrac\% of a
rebuild wall-clock, so the separation is algorithmic, not architectural.

\begin{figure}[t]
\centering
\includegraphics[width=\textwidth]{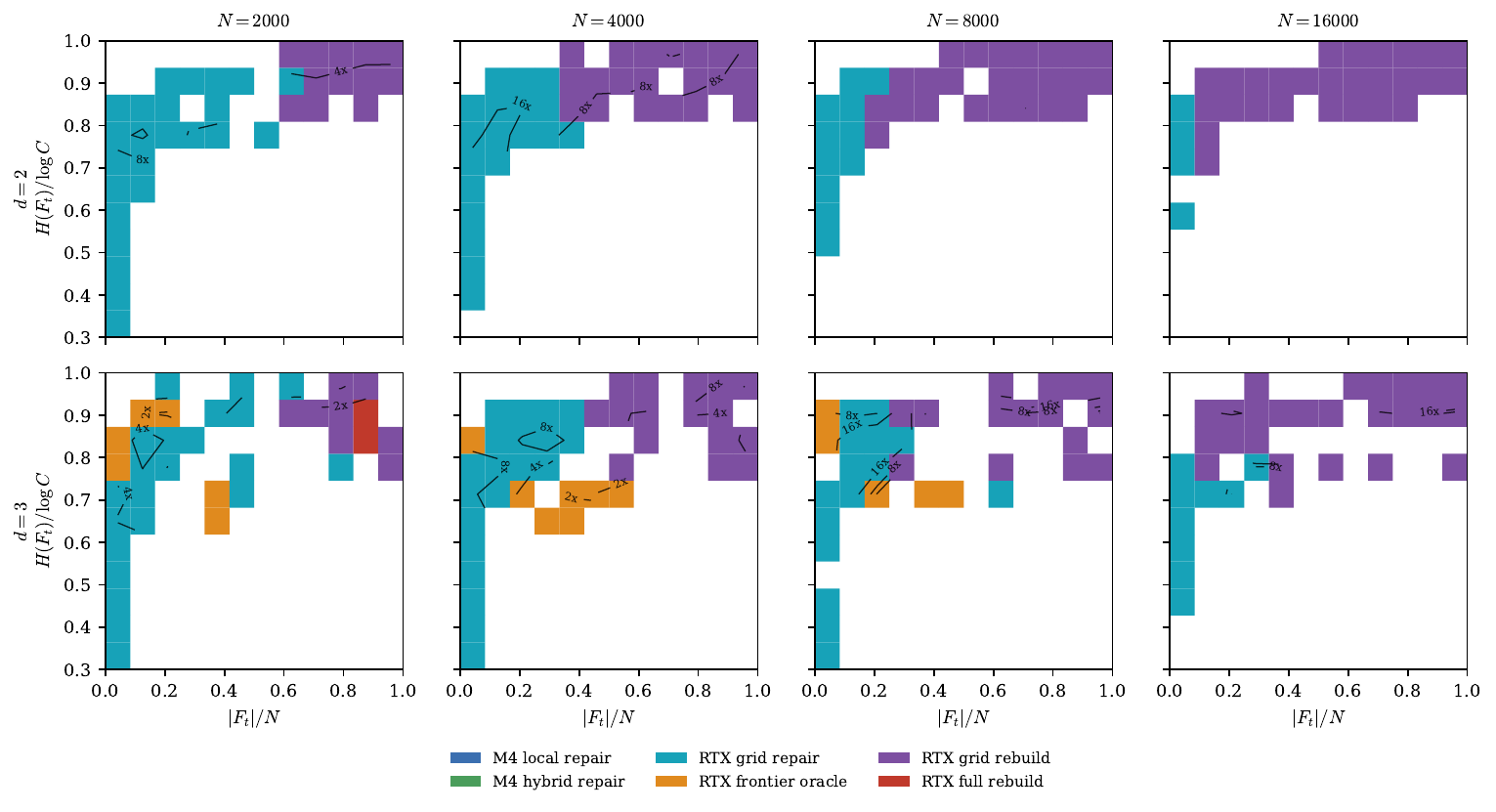}
\caption{Maintenance surface for dynamic nearest-neighbour certificates: rows are dimension
$d\in\{2,3\}$, columns are point count $N$. Cell colour is the cheapest of the six strategies
over clearance pressure $|F_t|/N$ and normalized repair-frontier entropy $H(F_t)/\log C$;
black contours give the median speedup of the best strategy over a naive
\texttt{rtx\_full\_rebuild}. Incremental repair (teal: GPU \texttt{rtx\_grid\_repair}; blue/green:
unified-memory host) owns the low-pressure band; the $O(N)$ GPU grid rebuild (purple) takes the
high-pressure band, with the exact frontier oracle (orange) in between. The incremental region
holds across $N$ and in both dimensions, its boundary tilting to higher pressure where the
frontier is localized.}
\label{fig:phase}
\end{figure}

The control experiment is Figure~\ref{fig:matched}, which fixes the number of failed
certificates and varies only their geometry. The \pairSceneA{} transition breaks
\pairBrokenA{} certificates into a compact region ($H=\pairHA$); the \pairSceneB{} transition
breaks \pairBrokenB{}, the same count to within the matching tolerance, spread across the
domain ($H=\pairHB$). The cost of \emph{event-driven} maintenance (Algorithm~\ref{alg:repair}'s
cell loop, here on the host) tracks the geometry, not the count: \pairEdA~ms for the localized
frontier against \pairEdB~ms for the diffuse one, because the diffuse frontier is scattered
across an order of magnitude more index cells. A batched GPU repair erases the penalty: it
is the cheapest strategy for both and is essentially entropy-insensitive, which is the
flip side of the same fact: repair-frontier entropy is what an event-driven structure pays for
spread, and batching is what removes it. The failure count alone predicts neither.

\begin{figure}[t]
\centering
\includegraphics[width=0.92\textwidth]{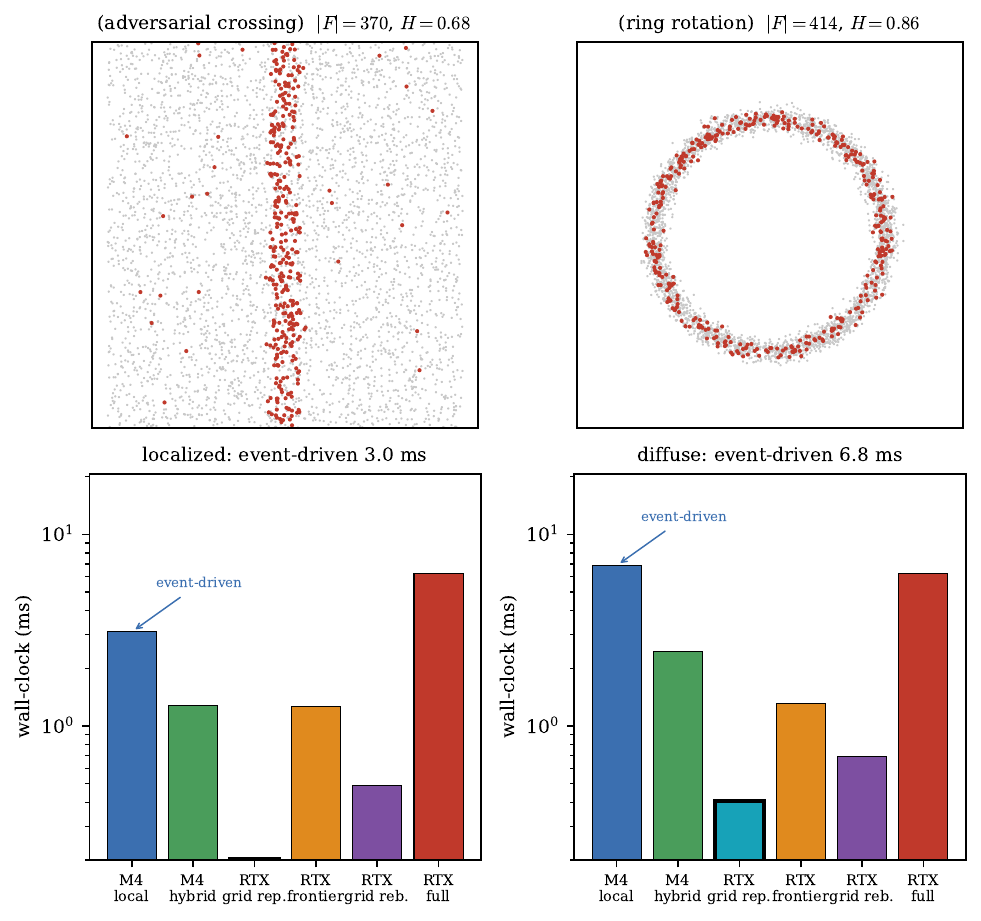}
\caption{Same failure count, different geometry. Top: spatial maps of two transitions with
near-equal failed-certificate counts; grey points are static this step, red points are the
repair frontier (a compact crossing seam vs.\ a diffuse ring). Bottom: the six-strategy
wall-clock (log scale; cheapest outlined). Event-driven host repair (blue, arrowed) costs
about twice as much for the diffuse frontier as for the localized one of equal size, because
it touches far more index cells; the batched GPU repair (cheapest in both) is
entropy-insensitive. Repair-frontier entropy, not the failure count, predicts the event-driven
cost.}
\label{fig:matched}
\end{figure}

\paragraph{Saturation is not the lever.}
The discrete card is not idle by nature: given a batch it saturates completely. The exact
tiled $k{=}2$ oracle holds \scaleVram~GB resident at full utilisation and sustains
\scaleThru~G distance-pairs/s (Table~\ref{tab:scale}). Saturation, however, does not make a
strategy win. The brute-force pass is $O(N^2)$, so at $N{=}200{,}000$ it trails an all-core CPU
$k$-d tree returning the same exact relabel by \cpuGpu$\times$, and it trails our $O(N)$ GPU
grid rebuild by far more; that is why brute force serves only as exact ground truth here while
\texttt{rtx\_grid\_rebuild} is the competitor in Figure~\ref{fig:phase}. Peak throughput does
not close an algorithmic gap. The same lesson governs the per-step surface: the structure of
the work, not the device's rate, decides how the certificates should be maintained.

\begin{table}[t]
\centering
\small
\caption{The exact brute-force oracle under a saturating batch: distance-pair throughput and
resident memory on the Blackwell card. The pass is $O(N^2)$; an all-core CPU $k$-d tree
returns the same relabel at $N{=}200{,}000$ in a fraction of the time, and the $O(N)$ GPU grid
of Figure~\ref{fig:phase} is faster still. Brute force is used as ground truth, not as a
competitor.}
\label{tab:scale}
\begin{tabular}{rrrr}
\toprule
$N$ & rebuild (ms) & dist-pairs/s (G) & resident VRAM (GB) \\
\midrule
40{,}000 & 54 & 30 & 7 \\
80{,}000 & 312 & 21 & 26 \\
120{,}000 & 742 & 19 & 49 \\
160{,}000 & 1348 & 19 & 49 \\
200{,}000 & 2095 & 19 & 49 \\
\midrule
200{,}000 (CPU $k$-d tree) & 149 & --- & --- \\
\bottomrule
\end{tabular}

\end{table}

\section{Limitations}
The competitor is a GPU uniform-grid rebuild; a hierarchical GPU index (LBVH, $k$-d tree)
\cite{karras2012,lauterbach2009,zhou2008} could lower its constant further, though not its
$O(N)$ dependence on the full set, which is the term incremental repair avoids. Host compute
is measured without bus traffic and GPU transfer is charged from measured card bandwidth and
latency rather than physically copying each payload; both devices regenerate identical state
from the shared seed, which isolates the algorithmic cost but assumes the working set is
resident on each device. We study $k\!=\!2$ certificates in $d\in\{2,3\}$ up to $N=16{,}000$;
much higher dimensions (where the grid degrades), larger $N$, and richer geometric
certificates (Delaunay edges, $k$-nearest-neighbour graphs, Voronoi adjacencies) are
left open. The validity radius uses the global step bound $\varepsilon_t$; a per-point motion
budget would tighten the safe set without affecting the no-miss guarantee. Repair-frontier
entropy is defined against the index cells, so its absolute scale depends on the chosen
resolution, although the ordering of motions is stable across resolutions.

\section{Dataset}
The release contains \nTrans{} transitions ($d\in\{2,3\}$) as Parquet and JSONL. Each row
carries the motion parameters, dimension and frame; the frontier descriptors
(\texttt{clearance\_pressure}, \texttt{frontier\_entropy}, \texttt{compactness},
\texttt{gyration}, \texttt{frontier\_cells}, \texttt{occupied\_cells}); the certificate-failure
audit (\texttt{eps\_t}, \texttt{n\_detect\_safe}, \texttt{n\_missed\_safe},
\texttt{n\_false\_safe}, and the $\rho$-predictor counts); the measured per-strategy times
(\texttt{t\_m4\_local\_s}, \texttt{t\_m4\_hybrid\_s}, \texttt{gpu\_grid\_repair\_s},
\texttt{gpu\_frontier\_s}, \texttt{gpu\_grid\_s}, \texttt{gpu\_full\_s}, transfer terms); and the
\texttt{best\_strategy} label with its speedups. Labels are the six strategy names. The data is
synthetic, so there
are no privacy concerns; intended use is studying maintenance strategies and frontier
descriptors for dynamic geometric indexes.

\section{Reproducibility}
The core (scenes, grid/Morton index, certificates, metrics) is pure, seeded NumPy shared
verbatim by both devices, so each regenerates bit-identical state from the manifest; only the
relabel kernels differ (NumPy/MLX on the M4, CuPy on the RTX). For all \nTrans{} transitions
the two devices produced identical nearest-neighbour hashes and identical frontier sizes, and
the exact GPU brute force and the host repairs matched the host $k$-d tree on every step (host
to $10^{-9}$, GPU to $10^{-6}$). The bounded GPU grid agrees on \gridRate{} of transitions; the
exceptions are single isolated points in the split/merge scene at $N{=}16{,}000$ whose nearest
neighbour lies beyond the search cap, the degenerate case the exact oracle exists to catch.
Running \texttt{run\_m4.py} and
\texttt{rtx\_oracle.py run} over the shared sweep reproduces the traces;
\texttt{build\_dataset.py} reproduces the labels, \texttt{make\_fig*.py} the figures.

\section{Conclusion}
Maintaining nearest-neighbour certificates under motion is governed by two quantities: the
fraction of certificates that fail and the \emph{entropy} of the invalid frontier. The first
sets whether incremental repair or a rebuild is cheaper; the second sets how expensive an
event-driven repair of a fixed-size frontier is, since a diffuse frontier touches far more
index cells than a localized one (Theorem~\ref{thm:entropy}). The validity-radius rule makes
the frontier safe to compute without missing a failure (Theorem~\ref{thm:safe}), repair-frontier
entropy predicts the maintenance cost, and an exact GPU oracle certifies both. The lesson is
algorithmic rather than architectural: even an $O(N)$ GPU rebuild on a saturated discrete card
loses to $O(|F_t|\log N)$ incremental repair whenever the frontier is a small fraction of the
set (Theorem~\ref{thm:cost}).

\bibliographystyle{plain}
\bibliography{refs}

\appendix
\section{Proofs and analysis}
\label{app:proofs}
A \emph{step} displaces every point by Euclidean distance at most $\varepsilon$; $c_i=d_2^i-d_1^i\ge0$
is the clearance of point $i$ before the step and $\mathrm{nn}_i$ its nearest neighbour. We write
$d'$ for post-step distances.

\subsection{Validity radius: safety and tightness}
\begin{theorem}[Safety and tightness]
\label{thm:safe}
If $c_i>4\varepsilon$ then $\mathrm{nn}_i$ is unchanged by the step. The constant $4$ cannot be
reduced: for every $\eta<4$ there is a configuration with $c_i=\eta\varepsilon$ and a step after
which a former runner-up is strictly closer to $i$ than $\mathrm{nn}_i$.
\end{theorem}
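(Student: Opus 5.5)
Safety is exactly Lemma~\ref{lem:safe}, so I will cite it and put all the work into tightness. The plan is a one-dimensional configuration embedded in $\mathbb{R}^d$ (placing all points on the first coordinate axis, so it works for $d\in\{2,3\}$) that realizes the triangle-inequality slack of $2\varepsilon$ on each side. Fix $\eta<4$ and $\varepsilon>0$. Set $\delta=(4-\eta)\varepsilon/2>0$ and $\varepsilon'=\varepsilon-\delta/4$, which is positive and less than $\varepsilon$; only $\varepsilon'$-moves will be used. Place $i$ at the origin, $\mathrm{nn}_i=a$ at $-d_1$, and a runner-up $k$ at $+d_2$ with $d_2=d_1+\eta\varepsilon$. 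Take $d_1$ large, say $d_1>10\varepsilon$, so that $d_1$ is still the smallest distance after the move. Then $c_i=\eta\varepsilon$, provided no other point is within $d_2$ of $i$. The simplest choice is three points only; if the paper's setting needs more points, I will add them far away, at distance $\gg d_2+2\varepsilon$ from everything, and keep them static.

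The step is as follows. Move $a$ by $\varepsilon'$ toward $-\infty$, which increases $d(i,a)$ by $\varepsilon'$. Move $k$ by $\varepsilon'$ toward $i$, which decreases $d(i,k)$ by $\varepsilon'$. Finally move $i$ by $\varepsilon'$ toward $k$, which adds $\varepsilon'$ to its distance from $a$ and subtracts $\varepsilon'$ from its distance to $k$. The post-step distances are
\[
d'(i,a)=d_1+2\varepsilon',\qquad d'(i,k)=d_2-2\varepsilon'=d_1+\eta\varepsilon-2\varepsilon'.
\]
Thus $d'(i,k)<d'(i,a)$ iff $\eta\varepsilon<4\varepsilon'=4\varepsilon-\delta$, i.e.\ $\delta<(4-\eta)\varepsilon$. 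This holds because $\delta=(4-\eta)\varepsilon/2$. So after the step the former runner-up $k$ is strictly closer to $i$ than $\mathrm{nn}_i$, and all displacements are at most $\varepsilon$. One could also take the full $\varepsilon$ moves; the strictness then comes from $\eta<4$ directly. That is simpler, so I will use $\varepsilon'=\varepsilon$ in the write-up and drop $\delta$.

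The main point needing care is the bookkeeping that the configuration really has $c_i=\eta\varepsilon$ as defined. I need $d_1^i$ attained by $a$ and $d_2^i$ by $k$ with no third point in between, and for $\eta$ near $0$ I need $d_1<d_2$ to hold non-strictly. The case $\eta\le 0$ is either vacuous ($c_i\ge0$) or handled by $\eta=0$, i.e.\ a tie, where any perturbation works. I also need the collinear placement to respect the step bound in the maximum-displacement sense $\varepsilon_t$ used in the paper, which it does since each point moves by at most $\varepsilon$. Everything else is the two displayed lines of arithmetic.
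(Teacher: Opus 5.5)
Your proposal is correct and follows the paper's proof. Safety is the same triangle-inequality argument as Lemma~\ref{lem:safe}. Tightness uses the same collinear three-point configuration, mirrored, with $i$ moved toward the runner-up, $\mathrm{nn}_i$ moved away, and the runner-up moved toward $i$, each by $\varepsilon$, which gives $d'(i,a)=d_1+2\varepsilon$ against $d'(i,k)=d_1+\eta\varepsilon-2\varepsilon$. The $\varepsilon'$ detour is unnecessary, as you note yourself. Taking $d_1$ large so that no points cross, and noting that $\eta<0$ is vacuous since $c_i\ge 0$, are harmless bookkeeping that the paper leaves implicit.
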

\begin{proof}
\emph{Sufficiency.} Let $a=\mathrm{nn}_i$ and $k\neq a$. Each endpoint of a pair moves by at most
$\varepsilon$, so the triangle inequality gives $d'(i,a)\le d_1^i+2\varepsilon$ and
$d'(i,k)\ge d(i,k)-2\varepsilon\ge d_2^i-2\varepsilon$. If $c_i=d_2^i-d_1^i>4\varepsilon$ then
$d_2^i-2\varepsilon>d_1^i+2\varepsilon$, so $d'(i,k)>d'(i,a)$ for all $k\neq a$ and $a$ stays
nearest. \emph{Tightness.} On a line put $i$ at $0$, $a$ at $d_1$, and $k$ at $-(d_1+\eta\varepsilon)$,
so $d_2^i=d_1+\eta\varepsilon$ and $c_i=\eta\varepsilon$. Move $i$ by $\varepsilon$ toward $k$, $a$ by
$\varepsilon$ away from $i$, and $k$ by $\varepsilon$ toward $i$ (each a legal displacement of size
$\varepsilon$). Then $d'(i,a)=d_1+2\varepsilon$ and $d'(i,k)=d_1+\eta\varepsilon-2\varepsilon$, so
$d'(i,k)<d'(i,a)$ exactly when $\eta<4$, changing the nearest neighbour.
\end{proof}

\begin{corollary}[No missed failure; exactness]
\label{cor:nomiss}
The detected set $\{i:c_i\le4\varepsilon\}$ contains the true frontier $F_t$, so
Algorithm~\ref{alg:repair} omits no invalidated certificate; recomputing the detected set from
its (adaptively widened) blocks keeps the nearest-neighbour relation exact after every step.
\end{corollary}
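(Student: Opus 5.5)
The corollary has two parts, and I will prove them in order: containment first, then exactness of the repaired relation.

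\emph{Containment.} I take the contrapositive of Theorem~\ref{thm:safe}, with the step bound set to $\varepsilon=\varepsilon_t=\max_i\lVert p'_i-p_i\rVert$. This is exactly the quantity Algorithm~\ref{alg:repair} computes in its first line, and by definition every point moves by at most it. Suppose $i\in F_t$, so $\mathrm{nn}_i$ changes. The sufficiency half of Theorem~\ref{thm:safe} then rules out $c_i>4\varepsilon_t$, so $c_i\le4\varepsilon_t$ and $i$ lies in the detected set $F=\{i:d_2^i-d_1^i\le4\varepsilon\}$ built in line~2. Hence $F_t\subseteq F$. This already gives the first claim: the loop over cells meeting $F$ visits every invalidated certificate.

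\emph{Exactness.} I argue by induction on the step index, with the hypothesis that every stored triple $(\mathrm{nn}_i,d_1^i,d_2^i)$ is exact at time $t$. The base case holds because the initial certificates come from the exact $k$-d tree.

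For $i\notin F$, Theorem~\ref{thm:safe} shows that $\mathrm{nn}_i$ is unchanged, so the identity part of the certificate remains correct.

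For $i\in F$, the triple is recomputed from a neighbour block. The recomputation is exact provided that block contains both the true nearest and second-nearest neighbours of $i$ after the step. A fixed $3^d$ block does not guarantee this for isolated points, which is the search-cap failure seen in the reproducibility section. I will therefore rely on the ``adaptively widened'' clause: I formalise the widening as growing the block ring by ring until the distance from $p'_i$ to the boundary of the searched region exceeds the current second-smallest candidate distance. At that point no point outside the region can beat the two candidates, so the two retained distances are the true post-step $d_1^i,d_2^i$. The search terminates because the grid is finite, and the full grid contains all points.

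\emph{The obstacle.} The hard part is not the identity of $\mathrm{nn}_i$ but the stored distances of safe points. For $i\notin F$ the values $d_1^i,d_2^i$ are stale after the step, and the next step's safety test uses them. There are two ways to handle this.
\begin{itemize}
\item Refresh $d_1^i$ cheaply for the known neighbour, and either bound the drift or reinterpret the stored clearance as a certified lower bound. I would show that $d_2^i-2\varepsilon$ and $d_1^i+2\varepsilon$ remain valid lower and upper bounds. Then $c_i$ can be replaced by a conservative value $c_i-4\varepsilon$, and the Theorem~\ref{thm:safe} argument still applies at the next step with these bounds in place of exact values.
\item Alternatively, read the corollary's exactness claim as concerning only the nearest-neighbour relation, which is what the statement asserts. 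Then the induction needs only that safety decisions use lower bounds on clearance, and the degraded clearance is exactly such a lower bound.
\end{itemize}
I would first try the second reading, since it matches the wording of the statement. I would fall back on the first only if the surrounding results need exact distances.
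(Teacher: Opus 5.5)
Your proof is correct and follows the paper's route. Containment is the contrapositive of Theorem~\ref{thm:safe}. Recomputed points are exact because their block is widened until the second neighbour is provably enclosed. Skipped points keep their nearest neighbour by Theorem~\ref{thm:safe} again.

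Your explicit induction over steps goes beyond the paper's short proof, which silently treats stored clearances as current. Your invariant, that the stored clearance is a certified lower bound (degraded by $4\varepsilon_t$ on skipped points), is exactly what the paper's cumulative clock $S-S_i<c_i/4$ in the expiry-heap subsection supplies. You need it under either of your two readings, so they are not really alternatives. With stale, undegraded clearances the literal Algorithm~\ref{alg:repair} can miss a failure: take $c_i=5\varepsilon$ and apply the tightness construction of Theorem~\ref{thm:safe} twice, so the nearest neighbour changes at the second step while the stored clearance still reads $5\varepsilon$.
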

\begin{proof}
Contrapositive of Theorem~\ref{thm:safe}: a failed certificate has $c_i\le4\varepsilon$. Each
detected point is recomputed against a block grown until its second neighbour is provably
enclosed, so its $k{=}2$ certificate is exact; undetected points retain a provably valid nearest
neighbour.
\end{proof}

\subsection{Per-step cost and the repair--rebuild crossover}
Across steps let $S=\sum_s\varepsilon_s$ accumulate the per-step maxima. Applying
Theorem~\ref{thm:safe} to cumulative displacement, a certificate last recomputed at clock value
$S_i$ stays valid while $S-S_i<c_i/4$. Keep a min-heap of points keyed by the expiry
$S_i+c_i/4$; advancing $S$ each step and popping the expired keys yields exactly the points that
must be revisited.

\begin{theorem}[Per-step cost]
\label{thm:cost}
With bounded cell occupancy, one maintenance step backed by the expiry heap costs
$O(|F_t|\log N)$ time, against $\Theta(N)$ for a full rebuild. Incremental repair is therefore
asymptotically cheaper precisely when $|F_t|=o(N/\log N)$.
\end{theorem}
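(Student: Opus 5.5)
The plan is to charge every operation of one maintenance step either to a popped heap key or to a recomputed certificate, and then to bound the number of those by $|F_t|$. The step does four things. It advances the clock $S\gets S+\varepsilon_t$, which is $O(1)$ once $\varepsilon_t$ is known. It pops every heap entry with key $S_i+c_i/4\le S$. For each popped point it recomputes the certificate from the $3^d$ neighbour block of its cell. Finally it reinserts the point with its new key $S+c_i'/4$. With bounded cell occupancy, a block holds $O(3^d)=O(1)$ points for fixed $d\in\{2,3\}$, so a recomputation costs $O(1)$. Widening the block until the second neighbour is provably enclosed, as in Corollary~\ref{cor:nomiss}, adds $O(1)$ amortized rings under the same density assumption. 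Each pop and each reinsertion costs $O(\log N)$. Updating the grid for points that change cell costs $O(1)$ per moved point.

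The cost is therefore $O((\text{pops}+\text{cell changes})\log N)$, and the main step is to bound the pops by $O(|F_t|)$. Here I read $F_t$ operationally, as the set the structure must revisit: the expired-key set, which by Theorem~\ref{thm:safe} applied to cumulative displacement is a superset of the true failures. The first thing to prove is correctness of the expiry rule. Since $S-S_i$ bounds the total displacement of every point since $i$ was last certified, the sufficiency half of Theorem~\ref{thm:safe} with $\varepsilon$ replaced by $S-S_i$ shows that an unexpired certificate is still valid.

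The obstacle is the hidden dependence between certificates. When a popped point moves, it can change the second-neighbour distance of a point $j$ whose certificate was not popped, since the bound for $j$ only used displacement. I would argue this needs no repair: the cumulative bound already covers arbitrary motion of all points, so $j$'s nearest neighbour remains correct without touching $j$. The stored $d_2^j$ may become stale, but the stale key stays conservative, so no extra work is charged. The remaining charge is the grid update. I would avoid an $O(N)$ scan by charging cell changes only to points that are processed, and by assuming the motion generator reports displacements. This assumption is needed to compute $\varepsilon_t$ without touching all $N$ points, and I would state it explicitly as part of the cost model. Under it the step costs $O(|F_t|\log N)$.

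For the rebuild, a full rebuild must write all $N$ certificates and so costs $\Omega(N)$. A grid rebuild with bounded occupancy costs $O(N)$, which gives $\Theta(N)$. Comparing $|F_t|\log N$ with $N$, the incremental cost is $o(N)$ exactly when $|F_t|=o(N/\log N)$. The word ``precisely'' should be read as a statement about these upper bounds, with constants taken as fixed. I expect the charging argument above, together with the need to make the $\varepsilon_t$ and cell-update assumptions explicit, to be the only delicate part.
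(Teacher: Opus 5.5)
Your main line is exactly the paper's proof. You pop the expired keys and reinsert them at $O(\log N)$ each, recompute each popped certificate from a bounded-occupancy block in $O(1)$, charge $\Theta(N)$ to the rebuild, and compare $|F_t|\log N$ with $N$; the paper's proof contains nothing beyond these charges.

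Several of your additions are correct and go beyond the paper:
\begin{itemize}
\item Reading $F_t$ as the expired-key set is what the paper does silently (``pops the $|F_t|$ expired keys''). It is also necessary: for the set of true failures, a rigid drift gives $F_t=\emptyset$ while keys can still expire.
\item An unexpired certificate stays valid by Theorem~\ref{thm:safe} applied to cumulative displacement, whatever is done to other points.
\item $\varepsilon_t$ must be supplied without the $\Theta(N)$ maximum in Algorithm~\ref{alg:repair}.
\item ``Precisely'' only compares upper bounds.
\end{itemize}

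The step that does not hold is the grid update. ``Charging cell changes only to points that are processed'' can mean one of two things, and neither works.
\begin{itemize}
\item It may assert that only popped points change cell. That is false for a per-step bound. Take well-separated tight pairs whose clearances far exceed $4\varepsilon_t$ and which sit just short of cell boundaries, and translate them rigidly in the step right after they are certified. Then $\Theta(N)$ points change cell and no key expires.
\item It may mean that $G$ is updated only for popped points. Then the recomputation is no longer exact. The block and its widening in Corollary~\ref{cor:nomiss} are read from $G$, and an unpopped $j$ is still filed at its position at time $S_j$, up to $\Delta_j=S-S_j<c_j/4$ from where it is now. Since $c_j/4$ is not bounded by a cell width, $j$ can lie inside the disc the widened block is meant to enclose while being filed outside it. The enclosure test then certifies a wrong neighbour pair with an overstated clearance, which can hide a later failure.
\end{itemize}
The paper's proof has the same hole: it never charges line 3 of Algorithm~\ref{alg:repair}.

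There are two ways to close it.
\begin{itemize}
\item State the bound as $O(|F_t|\log N)$ plus the number of cell changes.
\item Prove a staleness lemma. From $d'(j,k)\ge d_2^j-2\Delta_j$ for $k\ne\mathrm{nn}_j$ and $\Delta_j<c_j/4$, one gets that an unpopped $j$ that is one of the two current nearest neighbours of $i$ is filed within twice the current runner-up distance of $i$ from $p_i$. Searching the stale grid to that radius is therefore exact. You would still need an occupancy bound on filed positions, which does not follow from occupancy at any single time.
\end{itemize}

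Similarly, ``$O(1)$ amortized rings'' does not follow from bounded occupancy. Occupancy caps the points per cell but not the number of empty rings between an isolated point and its second neighbour; the paper's split/merge exceptions are such points. That needs a separate density assumption, which the paper also makes implicitly by asserting an $O(1)$ block.
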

\begin{proof}
Detection pops the $|F_t|$ expired keys and reinserts the repaired points, $O(|F_t|\log N)$ heap
operations; each repaired point recomputes its two nearest neighbours from an $O(1)$-occupancy
block in $O(1)$ time. A rebuild rebins all $N$ points and recomputes all certificates,
$\Theta(N)$. Setting $|F_t|\log N=\Theta(N)$ gives the threshold.
\end{proof}
\noindent This matches the measured surface: incremental repair owns the low-pressure band and
rebuild the high-pressure band (Figure~\ref{fig:phase}), with the crossover near $|F_t|\sim N/\log N$.

\subsection{Entropy lower-bounds the event-driven work}
\begin{theorem}[Entropy and touched cells]
\label{thm:entropy}
Let $m$ be the number of occupied index cells the frontier $F_t$ meets and $C$ the number of
occupied cells. Then $m\ge C^{\,H(F_t)}$, where $H(F_t)\in[0,1]$ is the normalized repair-frontier
entropy. An event-driven repair, which visits $\Theta(m)$ cells, therefore performs
$\Omega\!\left(C^{\,H(F_t)}\right)$ cell visits.
\end{theorem}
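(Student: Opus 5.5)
The bound $m\ge C^{H(F_t)}$ is the standard fact that Shannon entropy is at most the logarithm of the support size, rewritten in the normalized form. Write $q_\kappa$ for the fraction of $F_t$ lying in occupied cell $\kappa$, as in the definition of $H(F_t)$. Only the $m$ cells that meet the frontier have $q_\kappa>0$, and the convention $0\log 0=0$ lets us drop the remaining cells from the sum. We may assume $C\ge 2$, so that $\log C>0$ and the normalization is defined; when $C=1$ we have $m=1=C^{H}$ trivially. We may also assume $F_t\neq\emptyset$, since otherwise the entropy is not defined and there is nothing to bound.

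First, we bound the unnormalized entropy by $\log m$. On its support, $q$ is a probability vector with $m$ entries. By Jensen's inequality applied to the concave function $\log$,
\[
-\sum_{\kappa:q_\kappa>0} q_\kappa\log q_\kappa=\sum_{\kappa:q_\kappa>0} q_\kappa\log\frac{1}{q_\kappa}\le \log\Bigl(\sum_{\kappa:q_\kappa>0} q_\kappa\cdot\frac{1}{q_\kappa}\Bigr)=\log m .
\]
Equivalently, the Kullback--Leibler divergence from $q$ to the uniform distribution on its support is nonnegative.

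Second, we normalize and exponentiate. Dividing by $\log C>0$ gives $H(F_t)\le \log m/\log C$, that is, $H(F_t)\log C\le\log m$. Exponentiating in the same base yields $C^{H(F_t)}\le m$. This does not depend on the base of the logarithm, because the normalization cancels it.

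Third, we deduce the cost consequence. By hypothesis an event-driven repair visits $\Theta(m)$ cells; for Algorithm~\ref{alg:repair} it visits exactly one $3^d$ neighbour block per occupied cell meeting $F$. Hence the number of cell visits is at least a constant times $m$, which is $\Omega(C^{H(F_t)})$.

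One point needs care. Algorithm~\ref{alg:repair} iterates over the detected superset $\{i:c_i\le 4\varepsilon\}$, not over $F_t$ itself. By Corollary~\ref{cor:nomiss} this superset contains $F_t$, so the cells it meets include the $m$ cells met by $F_t$. The lower bound therefore only strengthens for the actual algorithm.

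There is no real obstacle in this argument. The only things to check are that the support restriction is handled correctly, so that zero-mass cells contribute nothing, and the degenerate cases $C=1$ and $F_t=\emptyset$.
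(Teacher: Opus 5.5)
Your proof is correct and follows the paper's argument. The paper likewise bounds the unnormalized entropy by $\log m$, citing that the uniform distribution maximises entropy where you derive this via Jensen. It then divides by $\log C$, exponentiates to get $m\ge C^{H(F_t)}$, and notes that the event-driven loop visits every frontier-meeting cell; your treatment of the degenerate cases $C=1$ and $F_t=\emptyset$, and of the detected superset in Algorithm~\ref{alg:repair}, only makes explicit what the paper leaves implicit.
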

\begin{proof}
Let $S=-\sum_\kappa q_\kappa\log q_\kappa$ be the Shannon entropy of the frontier's distribution
over its $m$ occupied cells. The uniform distribution maximises entropy over $m$ outcomes, so
$S\le\log m$. By definition $H(F_t)=S/\log C$, hence $\log m\ge S=H(F_t)\log C$, i.e.\
$m\ge C^{\,H(F_t)}$. The event-driven loop visits each occupied frontier cell at least once.
\end{proof}
\noindent At fixed frontier size, entropy thus lower-bounds the number of distinct index regions
an event-driven repair must touch, which is the cost isolated in Figure~\ref{fig:matched} and
erased by batching.
\end{document}